\documentclass[11pt,onecolumn,twoside]{IEEEtran} 
\usepackage{amssymb,url,amsmath,amsthm}
\usepackage{here}
\usepackage[pdftex]{graphicx}
\usepackage{bm}
\usepackage{caption}
\theoremstyle{theorem}
\newtheorem{theorem}{Theorem}
\newtheorem{proposition}{Proposition}
\newtheorem{corollary}{Corollary}
\newtheorem{lemma}{Lemma}

\theoremstyle{definition}
\newtheorem{definition}{Definition}
\newtheorem{example}{Example}

\def\dmu{\mathrm{d}\mu}
\def\P{\mathcal{P}}
\def\L{\mathcal{L}}
\def\B{\mathcal{B}}

\def\O{\mathcal{O}}
\def\M{\mathcal{M}}
\def\<{\langle}
\def\>{\rangle}
\newcommand{\eqdef}{:=}

\newcommand{\argmin}{\mathop{\rm arg~min}\limits}

\title{Minimization Problems on Strictly Convex Divergences}
\author{Tomohiro Nishiyama \\ Email: htam0ybboh@gmail.com}

\begin{document} 
\maketitle
\bibliographystyle{plain}
\begin{abstract}
The divergence minimization problem plays an important role in various fields. In this note, we focus on differentiable and strictly convex divergences. For some minimization problems, we show the minimizer conditions and the uniqueness of the minimizer without assuming a specific form of divergences. Furthermore, we show geometric properties related to the minimization problems.

\end{abstract}
\noindent \textbf{Keywords:} convex, minimization problem, projection, centroid, Bregman divergence, f-divergence, R{\'e}nyi-divergence, Lagrange multiplier.

\section{Introduction}
The divergences are quantities that measure discrepancy between probability measures.
For two probability measures $P$ and $Q$, divergences satisfy the following properties. \\

$D(P\|Q)\geq 0$ with equality if and only if $P=Q$.\\

In particular, the $f$-divergence \cite {sason2016f, csiszar1967information}, the Bregman divergence \cite {bregman1967relaxation} and the R{\'e}nyi divergence \cite{van2014renyi,renyi1961measures} are often used in various fields such as machine learning, image processing, statistical physics, finance and so on.

In order to find the probability measures closest (in the meaning of divergence) to the target probability measure subject to some constraints, it is necessary to solve divergence minimization problems and there are many works about them \cite {csiszar2003information, breuer2016measuring, banerjee2005clustering}.
Divergence minimization problems are also deeply related to the geometric properties of divergences such as the projection from a probability measure to a set.

The main purpose of this note is to study minimization problems and geometric properties of differentiable and strictly convex divergences in the first or the second argument \cite {nishiyama2019monotonically}.
For example, the squared Euclidean distance is differentiable and strictly convex. The most important result is that we can derive the minimizer conditions and the uniqueness of the minimizer from only these assumptions without specifying the form of divergences if there exist the solutions that satisfy the minimizer conditions. The minimizer conditions are consistent with the results of the method of Lagrange multipliers.

First, We introduce divergence lines, balls, inner products and orthogonal subsets that are the generalization of line segments, spheres, inner products and orthogonal planes perpendicular to lines in the Euclidean space, respectively. Furthermore, we show the three-point inequality as a basic geometric property and show some properties of the divergence inner product.

Next, we discuss the minimization problem of the weighted average of divergences from some probability measures, which is important in clustering algorithms such as k-means clustering \cite{lloyd1982least}.
We show that the minimizer of the weighted average of divergences is the generalized centroid as in the case of the Euclidean space. 

Finally, we discuss the minimization problems between a probability measure $P$ and a set $\mathcal{S}$. These are interpreted as a projection from the probability measure $P$ to the set $\mathcal{S}$.
In the Euclidean space, the minimum distance from a point $P \in\mathbb {R}^3$ to a plane is given by the perpendicular foot, and the minimum distance to the sphere is given by the intersection of the sphere and a line connecting $P$ and the center of the sphere.
We show that the minimizer of divergence between a probability measure and the divergence ball or the orthogonal subset have the similar properties as in the case of the Euclidean space.

\section{Preliminaries}
This section provides definitions and notations which are used in this note.
Let $\P$ denotes the set of probability measures.
For $P,Q\in \P$, $P=Q$ denotes $P=Q \mbox{ a.s.}$.
Let $\mu$ be a dominating measure of $\P$ ($P\ll \mu$) and $p\eqdef \frac{dP}{d\mu}$ be the density of $P$. 

Divergences are defined as functions that satisfy the following properties.

Let $D: \P \times \mathcal{P}\rightarrow[0,\infty) $. 
For any $P,Q\in \mathcal{P}$,
\begin{align}
D(P\|Q)\geq 0, \nonumber \\ 
D(P\|Q)=0 \iff P=Q. \nonumber
\end{align}

\begin{definition}[Strictly convex divergence]
\label{def_strictly_convex}
Let $P, Q, R\in\P$ and $Q\neq R$.
Let $t\in(0,1)$ and let $D$ be a divergence.
The divergence $D(P\|Q)$ is strictly convex in the second argument if 
\begin{align}
(1-t)D(P\|Q) +t D(P\|R) > D(P\|(1-t)Q+t R).
\end{align} 
\end{definition}

\begin{definition}[Differentiable divergence]
Let $P, Q\in\P$ and let $D$ be a divergence.
The divergence $D(P\|Q)$ is differentiable with respect to the second argument if $D(P\|Q)$ is the functional of $q=\frac{dQ}{d\mu}$ and the functional derivative exists with respect to $q$.
Let $D[q] \eqdef D(P\|Q)$ be a functional with respect to $q$.
The functional derivative of $D(P\|Q)$ with respect to $q$ is defined by
\begin{align}
\int \frac{\delta D(P\|Q)}{\delta q(z)} \phi(z) \mathrm{d}\mu(z) \eqdef \left.\frac{d}{d\epsilon}D[q+\epsilon \phi]\right|_{\epsilon=0},
\end{align}
where $\phi$ is an arbitrary function.
\end{definition}
The strictly convex or differentiable divergence in the first argument, we can define in the same way as the second argument.
We show some examples of the functional derivative.
\begin{example}[Squared Euclidean distance]
Let $D_{\mathrm{E}}(P\|Q)\eqdef \frac{1}{2}\int (q-p)^2 \dmu$.
The functional derivative is 
\begin{align}
\frac{\delta D_{\mathrm{E}}(P\|Q)}{\delta q(z)}=q(z)-p(z).
\end{align} 
\end{example} 

\begin{example}[Bregman divergence]
\label{ex_bregman}
Let $f: \mathbb{R}\rightarrow \mathbb{R}$ be a differentiable and strictly convex function.
The Bregman divergence is defined by
\begin{align}
D_{\mathrm{B}}(P\|Q)\eqdef \int f(p) \dmu-\int f(q)\dmu -\int f'(q)(p-q) \dmu,
\end{align}
where $f'(x)$ denotes the derivative of $f$.
The Bregman divergence is strictly convex in the first argument.
The functional derivative is 
\begin{align}
\frac{\delta D_{\mathrm{B}}(P\|Q)}{\delta p(z)}=f'(p(z))-f'(q(z)).
\end{align} 
\end{example}

\begin{example}[$f$-divergence]
Let $f: \mathbb{R}\rightarrow \mathbb{R}$ be a strictly convex function and $f(1)=0$.
The $f$-divergence is defined by
\begin{align}
D_f(P\|Q)\eqdef \int qf\biggl(\frac{p}{q}\biggr) \dmu.
\end{align}
The $f$-divergence is strictly convex in the first and the second argument.
If $f$ is differentiable, the functional derivatives are
\begin{align}
\frac{\delta D_f(P\|Q)}{\delta q(z)}=\tilde{f}'\biggl(\frac{q(z)}{p(z)}\biggr),
\end{align} 
where $\tilde{f}(x)\eqdef xf\bigl(\frac{1}{x}\bigr)$ and 
\begin{align}
\frac{\delta D_f(P\|Q)}{\delta p(z)}=f'\biggl(\frac{p(z)}{q(z)}\biggr).
\end{align} 
\end{example}

\begin{example}[R{\'e}nyi-divergence]
For $0<\alpha<\infty$, the R{\'e}nyi-divergence is defined by
\begin{align}
D_\alpha(P\|Q)\eqdef \frac{1}{\alpha-1}\log\int p^\alpha q^{1-\alpha} \dmu \mbox{ for } \alpha \neq 1,\\ \nonumber
D_1(P\|Q)\eqdef\int p\log\frac{p}{q} \dmu.
\end{align}
The R{\'e}nyi divergence is strictly convex in the second argument for $0<\alpha<\infty$ (see \cite{van2014renyi}).
The functional derivative is
\begin{align}
\frac{\delta D_\alpha(P\|Q)}{\delta q(z)}=-\frac{1}{\int p^\alpha q^{1-\alpha} \dmu}{\biggl(\frac{p(z)}{q(z)}\biggr)}^\alpha.
\end{align} 
\end{example}

If a divergence $D(P\|Q)$ is differentiable or strictly convex in the first argument, by putting $\hat{D}(P\|Q)= D(Q\|P)$, $\hat{D}$ is differentiable or strictly convex in the second argument.
Hence, in the following, we only consider the differentiable or strictly convex divergences in the second argument.

\begin{definition}[Divergence line]
Let $D$ be a differentiable divergence and let $P,Q\in \P$.
The ``divergence line'' $\L(P:Q)$ is defined by 
\begin{align}
\L(P:Q)\eqdef \{R\in \P| \mbox{for } \alpha\in[0,1], \exists C(\alpha)\in\mathbb{R}, (1-\alpha) \frac{\delta D(P\|R)}{\delta r(z)} + \alpha \frac{\delta D(Q\|R)}{\delta r(z)}=C(\alpha)\}.
\end{align}
We also define probability measures on the divergence line at $\alpha$ by
\begin{align}
\L_\alpha(P,Q)\eqdef \{R\in \P| \exists C\in\mathbb{R}, (1-\alpha) \frac{\delta D(P\|R)}{\delta r(z)} + \alpha \frac{\delta D(Q\|R)}{\delta r(z)}=C\}.
\end{align}
\end{definition}
We show some examples of the divergence lines.
\begin{example}[Squared Euclidean distance]
\begin{align}
\L(P:Q)= \{R\in \P| \mbox{for } \alpha\in[0,1], R=(1-\alpha)P+\alpha Q\}.
\end{align}
\end{example}
These are mixture distributions and a line segment in Euclidean space.

\begin{example}[Kullback-Leibler divergence]
The Kullback-Leibler divergence (KL-divergence or relative entropy) \cite{kullback1951information} $D_{\mathrm{KL}}(P\|Q)$ belongs to both the Bregman divergence and the $f$-divergence.
\begin{align}
D(P\|Q)=D_{\mathrm{KL}}(P\|Q)\eqdef \int p\log\frac{p}{q} \dmu.
\end{align}
The divergence line is
\begin{align}
\L(P:Q)= \{R\in \P| \mbox{for } \alpha\in[0,1], R=(1-\alpha)P+\alpha Q\}.
\end{align}
For the reverse KL-divergence $D(P\|Q)=D_\mathrm{KL}(Q\|P)=\int q\log\frac{q}{p} \dmu$, 
\begin{align}
\L(P:Q)= \{R\in \P| \mbox{for } \alpha\in[0,1], r\eqdef \frac{dR}{d\mu}=\frac{1}{\int p^{(1-\alpha)}q^\alpha \dmu} p^{(1-\alpha)}q^\alpha \}.
\end{align}
These correspond to the m-geodesic and the e-geodesic in information geometry  
 \cite{amari2010information}.
\end{example}

\begin{definition}[Divergence inner product]
Let $D$ be a differentiable divergence.
Let $P,Q,R\in\P$.
We define ``divergence inner product'' by 
\begin{align}
\<PQ\|RQ\>\eqdef \int (q(z)-r(z))\frac{\delta D(P\|Q)}{\delta q(z)} \mathrm{d}\mu(z).
\end{align}
\end{definition}
For the squared Euclidean distance $D_{\mathrm{E}}(P\|Q)=\frac{1}{2}\int (q-p)^2 \dmu$, $\<PQ\|RQ\>=\int (p-q)(r-q) \dmu$ and this is the inner product of functions $p-q$ and $r-q$.

\begin{definition}[Orthogonal subspace]
Let $P,Q\in\P$.
We define the orthogonal subspace at $Q$ by 
\begin{align}
\O(P:Q)\eqdef \{R\in\P| \<PQ\|RQ\> =0\}.
\end{align}
\end{definition}
Since the divergence inner product is linear with respect to $R$, the orthogonal subspace is a convex set.

\begin{definition}[Divergence ball]
Let $P\in\P$ and $D$ be a divergence.
We define the divergence ball by
\begin{align}
\B_\kappa(P)\eqdef \{Q\in\P| D(P\|Q) \leq \kappa \}
\end{align}
and the surface of the divergence ball by
\begin{align}
\partial\B_\kappa(P)\eqdef \{Q\in\P| D(P\|Q)=\kappa \}.
\end{align}
\end{definition}
If the divergence is convex, the divergence ball is a convex set from the definition.

\section{Main results}
In this section, we focus on the differentiable and strictly convex divergences and show some properties of them.
We first show the three-point inequality and some properties of the divergence inner product.
Next, we discuss some minimization problems and we show that the minimizer conditions and the uniqueness of the minimizer if there exist the solutions that satisfy the minimizer conditions.

We prove the following lemma that we use in various proofs.
\begin{lemma}
\label{lem_convex}
Let $D$ be a strictly convex divergence and $P\neq Q\in\P$. Let $\lambda\in[0,1]$ and $Q_\lambda\eqdef P+(Q-P)\lambda$. Then, $D(P\|Q_\lambda)$ is strictly convex with respect to $\lambda$.
\end{lemma}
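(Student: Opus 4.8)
The plan is to reduce strict convexity of $\lambda \mapsto D(P\|Q_\lambda)$ to Definition~\ref{def_strictly_convex} by exploiting that the map $\lambda \mapsto Q_\lambda = (1-\lambda)P + \lambda Q$ is affine, hence commutes with convex combinations. So I would fix $\lambda_1 \neq \lambda_2$ in $[0,1]$ and $t \in (0,1)$, put $\lambda_t \eqdef (1-t)\lambda_1 + t\lambda_2 \in [0,1]$, and aim to establish $(1-t)D(P\|Q_{\lambda_1}) + t D(P\|Q_{\lambda_2}) > D(P\|Q_{\lambda_t})$.

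The key elementary step is the identity $(1-t)Q_{\lambda_1} + t Q_{\lambda_2} = Q_{\lambda_t}$. This follows by writing $Q_\lambda = (1-\lambda)P + \lambda Q$, expanding the convex combination on the left, and collecting the coefficients of $P$ and $Q$ using $(1-t)\lambda_1 + t\lambda_2 = \lambda_t$. Along the way I would also record that $Q_{\lambda_1}, Q_{\lambda_2}, Q_{\lambda_t}$ all lie in $\P$, being convex combinations of $P, Q \in \P$, so that $D$ is defined at these arguments.

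Next I need the non-degeneracy hypothesis required by Definition~\ref{def_strictly_convex}, namely $Q_{\lambda_1} \neq Q_{\lambda_2}$. This holds because $Q_{\lambda_1} - Q_{\lambda_2} = (\lambda_1 - \lambda_2)(Q - P)$, which is nonzero since $\lambda_1 \neq \lambda_2$ and $P \neq Q$ by assumption. Applying the strict-convexity inequality of Definition~\ref{def_strictly_convex} with second arguments $Q_{\lambda_1}$ and $Q_{\lambda_2}$ and weight $t$, and then substituting the identity from the previous step, gives $(1-t)D(P\|Q_{\lambda_1}) + t D(P\|Q_{\lambda_2}) > D(P\|(1-t)Q_{\lambda_1} + t Q_{\lambda_2}) = D(P\|Q_{\lambda_t})$, which is precisely strict convexity of $\lambda \mapsto D(P\|Q_\lambda)$ on $[0,1]$.

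I do not expect a genuine obstacle here; the only care needed is bookkeeping — verifying that the intermediate measures remain in $\P$ and that $P \neq Q$ is exactly what supplies the $Q_{\lambda_1} \neq Q_{\lambda_2}$ needed to invoke strict convexity. The identical argument (with the first argument varying) handles divergences that are strictly convex in the first argument, consistent with the reduction remarked on after Example~\ref{ex_bregman}.
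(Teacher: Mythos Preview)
Your proposal is correct and matches the paper's own proof essentially step for step: both exploit the affine identity $(1-t)Q_{\lambda_1}+tQ_{\lambda_2}=Q_{(1-t)\lambda_1+t\lambda_2}$, use $P\neq Q$ and $\lambda_1\neq\lambda_2$ to obtain $Q_{\lambda_1}\neq Q_{\lambda_2}$, and then apply Definition~\ref{def_strictly_convex}. Nothing further is needed.
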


\begin{proof}
When $\lambda_1\neq \lambda_2\in [0,1]$, $Q_{\lambda_1}\neq Q_{\lambda_2}$ holds since $P\neq Q$.
From the assumption of strictly convexity of the divergence, for $t\in(0,1)$, 
\begin{align}
\label{ineq_def_convexity}
t D(P\|Q_{\lambda_1}) + (1-t)  D(P\|Q_{\lambda_2}) > D(P\|t Q_{\lambda_1} + (1-t)Q_{\lambda_2}).
\end{align}
From the definition of $Q_\lambda$, we have $t Q_{\lambda_1} + (1-t)Q_{\lambda_2}=P+(Q-P)(t\lambda_1 + (1-t)\lambda_2)=Q_{t \lambda_1+ (1-t) \lambda_2}$.
By combining this equality and (\ref{ineq_def_convexity}), the result follows.
\end{proof}

\subsection{Three-point inequality}
In this subsection, we show some geometric properties of differentiable and strictly convex divergences.

\begin{theorem}[Three-point inequality]
\label{th_three_point}
Let $D$ be a differentiable and strictly convex divergence.
Let $P,Q,R\in \P$.

Then,
\begin{align}
\label{ineq_three_point_1}
D(P\|R)\geq D(P\|Q) - \<PQ\|RQ\>,
\end{align}
where the equality holds if and only if $Q=R$.
\end{theorem}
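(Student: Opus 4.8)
The plan is to reduce the three-point inequality to the one-dimensional convexity statement of Lemma~\ref{lem_convex} by looking at the restriction of $D(P\|\cdot)$ to the divergence line (in the mixture sense) joining $Q$ and $R$. Concretely, set $Q_\lambda \eqdef Q + (R-Q)\lambda$ for $\lambda\in[0,1]$, so that $Q_0 = Q$ and $Q_1 = R$, and define $g(\lambda) \eqdef D(P\|Q_\lambda)$. By Lemma~\ref{lem_convex} (applied with the roles of the fixed endpoint and the moving endpoint adjusted — here $P$ is the first argument and $Q$ plays the part of the base point, $R$ the other), $g$ is strictly convex on $[0,1]$ when $Q\neq R$; when $Q=R$ the inequality is the trivial equality $D(P\|R)=D(P\|R)$, so assume $Q\neq R$.

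Next I would use the standard fact that a strictly convex differentiable function lies strictly above its tangent line: $g(1) > g(0) + g'(0)$, i.e.\ $D(P\|R) > D(P\|Q) + g'(0)$. The main computational step is to identify $g'(0)$ with $-\<PQ\|RQ\>$. By the chain rule for functional derivatives, $g'(\lambda) = \int \frac{\delta D(P\|Q_\lambda)}{\delta q_\lambda(z)}\,\frac{d q_\lambda(z)}{d\lambda}\,\mathrm{d}\mu(z)$, and since $\frac{d q_\lambda}{d\lambda} = r - q$ is independent of $\lambda$, evaluating at $\lambda=0$ gives $g'(0) = \int (r(z)-q(z))\frac{\delta D(P\|Q)}{\delta q(z)}\,\mathrm{d}\mu(z) = -\int (q(z)-r(z))\frac{\delta D(P\|Q)}{\delta q(z)}\,\mathrm{d}\mu(z) = -\<PQ\|RQ\>$, exactly matching the definition of the divergence inner product. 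Substituting yields $D(P\|R) > D(P\|Q) - \<PQ\|RQ\>$, which is the claimed strict inequality whenever $Q\neq R$.

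Finally I would assemble the equality characterization: if $Q=R$ then $\<PQ\|RQ\>=0$ and both sides equal $D(P\|Q)$, so equality holds; conversely, the argument above shows that $Q\neq R$ forces strict inequality, so equality implies $Q=R$.

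The step I expect to be the main obstacle is the differentiation of $g$ under the integral sign — justifying that $g'(\lambda)$ is given by the functional-derivative formula and that strict convexity of $g$ (which a priori only gives $g(1)\geq g(0)+g'(0)$ with equality only in degenerate cases) actually yields the strict tangent-line inequality. One has to be a little careful that strict convexity of $g$ on the closed interval, combined with differentiability at the endpoint $\lambda=0$, does give $g(1) > g(0) + g'(0)$; this follows because for strictly convex $g$ the slope $\frac{g(\lambda)-g(0)}{\lambda}$ is strictly increasing in $\lambda$ and bounded below by $g'(0)$, with the inequality at $\lambda=1$ being strict. The rest is bookkeeping with the definitions already set up in the preliminaries.
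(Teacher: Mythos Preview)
Your proposal is correct and follows essentially the same approach as the paper: define the one-parameter family $Q_\lambda=Q+\lambda(R-Q)$ (the paper calls it $R_\lambda$), invoke Lemma~\ref{lem_convex} to get strict convexity of $\lambda\mapsto D(P\|Q_\lambda)$, compute the derivative at $\lambda=0$ via the functional-derivative definition to identify it with $-\<PQ\|RQ\>$, and then use the strict tangent-line inequality $g(1)>g(0)+g'(0)$. Your added remarks on the equality case and on why strict convexity yields a strict tangent inequality are sound elaborations of steps the paper leaves implicit.
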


\begin{proof}
Let $R_\lambda\eqdef Q+(R-Q)\lambda$ and $F(\lambda)\eqdef D(P\|R_\lambda)$ with $Q\neq R$.
From the assumption and Lemma \ref{lem_convex}, $F(\lambda)$ is strictly convex.
From the definition of the functional derivative for $\phi(z)=r(z)-q(z)$,  
\begin{align}
\label{eq_cos_1}
F'(\lambda)=\left.\frac{d}{d\epsilon}D(P\|R_{\lambda+\epsilon})\right|_{\epsilon=0}=\int \frac{\delta D(P\|R_\lambda)}{\delta r_\lambda(z)}(r(z)-q(z)) \dmu(z),
\end{align}
where we use $r_\lambda(z)+\epsilon(r(z)-q(z))=r_{\lambda+\epsilon}(z)$.
From the strictly convexity of $F(\lambda)$, for $\lambda >0$, we have
\begin{align}
\label{eq_cos_2}
F(\lambda)>F(0)+F'(0)(\lambda-0).
\end{align}
Substituting $\lambda=1$ into  (\ref{eq_cos_2}) and using $F(1)=D(P\|R), F(0)=D(P\|Q)$ and $F'(0)=\int \frac{\delta D(P\|Q)}{\delta q(z)}(r(z)-q(z)) \dmu=-\<PQ\|RQ\>$, we have
\begin{align}
\label{eq_cos_3}
D(P\|R)>D(P\|Q)-\<PQ\|RQ\>.
\end{align}
Hence, we have the result.
\end{proof}

For the Bregman divergence, three-point identity holds \cite{nielsen2007bregman}.
\begin{align}
D_{\mathrm{B}}(Q\|P)+D_{\mathrm{B}}(R\|Q)=D_{\mathrm{B}}(R\|P)+\int (r-q)(f'(p)-f'(q)) \dmu.
\end{align}
By using $D_{\mathrm{B}}(R\|Q)\geq 0$ and the result of Example \ref{ex_bregman}, we have the same inequality as (\ref{ineq_three_point_1}) by putting $D(P\|Q)=D_{\mathrm{B}}(Q\|P)$

We show the figure of three-point inequality.
\begin{figure}[H]
\begin{center}
\includegraphics[width=70mm, height = 50mm]{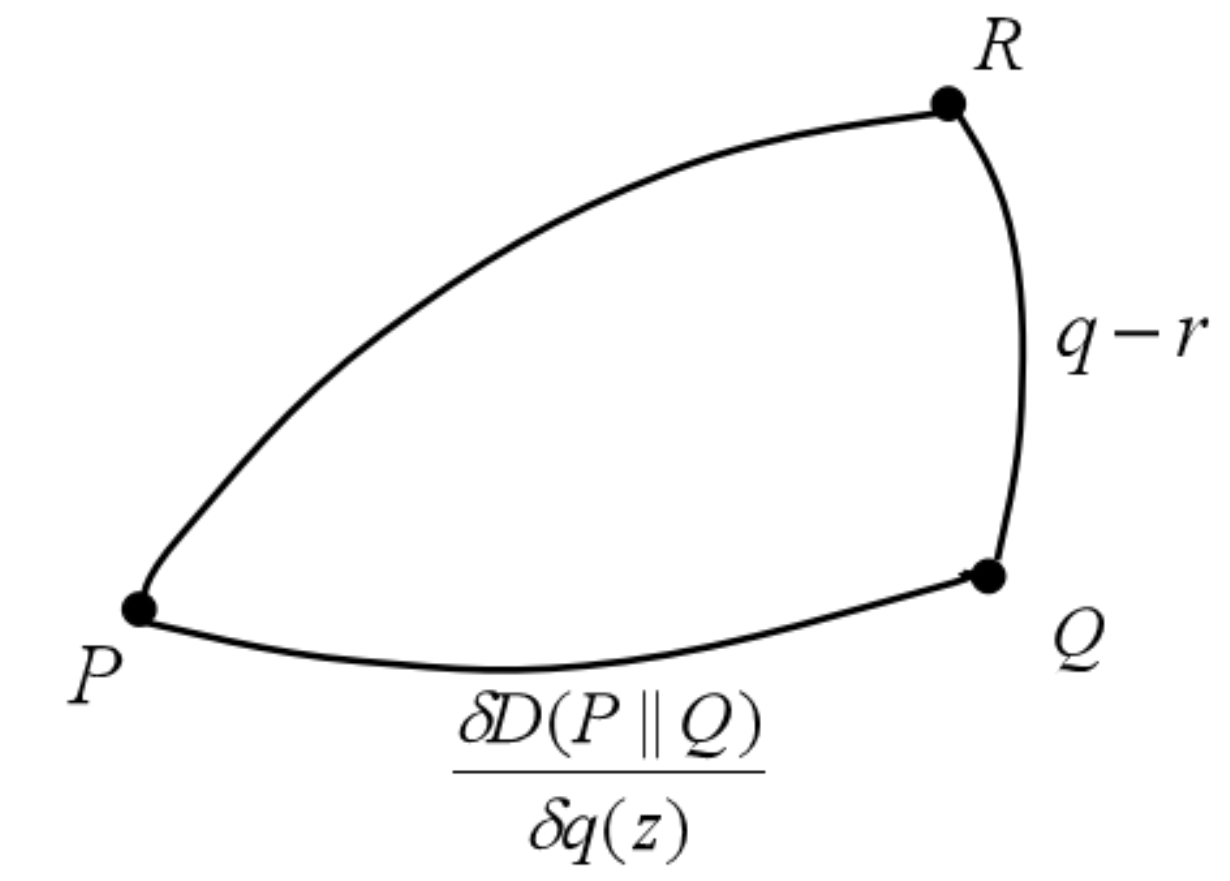} 
\caption{Three points inequality.}
\label{fig:one}
\end{center}
\end{figure}

\begin{proposition}
\label{prop_inner_product}
Let $D$ be a differentiable and strictly convex divergence.
Let $P,Q,S\in \P$ and $R\in \L_\alpha(P,Q)$ for $\alpha\in[0,1]$.

Then,
\begin{align}
(1-\alpha) \<PR\|SR\>+\alpha\<QR\|SR\>=0.
\end{align}
\end{proposition}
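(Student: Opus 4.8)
The plan is to expand the divergence inner products directly from their definition and invoke the defining equation of the divergence line $\L_\alpha(P,Q)$. By definition of the divergence inner product, for a probability measure $R$ we have $\<PR\|SR\> = \int (r(z)-s(z))\frac{\delta D(P\|R)}{\delta r(z)}\,\dmu(z)$ and likewise $\<QR\|SR\> = \int (r(z)-s(z))\frac{\delta D(Q\|R)}{\delta r(z)}\,\dmu(z)$. Forming the convex combination $(1-\alpha)\<PR\|SR\> + \alpha\<QR\|SR\>$, I would pull the common factor $(r(z)-s(z))$ out of both integrands and combine them into a single integral against the weighted functional derivative
\begin{align}
(1-\alpha)\frac{\delta D(P\|R)}{\delta r(z)} + \alpha\frac{\delta D(Q\|R)}{\delta r(z)}.
\end{align}

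Since $R\in\L_\alpha(P,Q)$, this weighted functional derivative equals a constant $C$ (independent of $z$), by the definition of $\L_\alpha(P,Q)$. Substituting this in, the combined expression becomes $C\int (r(z)-s(z))\,\dmu(z)$. Finally, because $R$ and $S$ are both probability measures, $\int r(z)\,\dmu(z) = \int s(z)\,\dmu(z) = 1$, so $\int (r(z)-s(z))\,\dmu(z) = 0$, and the whole expression vanishes. This gives the claimed identity.

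The argument is essentially a one-line computation, so there is no real obstacle; the only point requiring mild care is the interchange of the finite sum (the convex combination) with the integral, which is immediate here since it is a two-term sum, and the observation that the constant $C$ from the divergence-line condition can be factored out of the integral precisely because it does not depend on $z$. The normalization of probability measures then does the rest.
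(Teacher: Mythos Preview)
Your proof is correct and matches the paper's own argument essentially verbatim: the paper multiplies the defining equation of $\L_\alpha(P,Q)$ by $r(z)-s(z)$, integrates, and uses $\int r\,\dmu=\int s\,\dmu=1$, which is exactly what you do in the reverse order.
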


\begin{proof}
From the assumption, $R$ satisfies
\begin{align}
(1-\alpha) \frac{\delta D(P\|R)}{\delta r(z)} + \alpha \frac{\delta D(Q\|R)}{\delta r(z)}=C.
\end{align}
By multiplying both sides by $r(z)-s(z)$ and integrating with respect to $z$ and using $\int s\dmu=\int r\dmu=1$, the result follows. 
\end{proof}

\begin{corollary}
Let $D$ be a differentiable and strictly convex divergence and $P\neq Q\in\P$.

Then,
\begin{align}
 \<PQ\|PQ\> > D(P\|Q).
\end{align}
\end{corollary}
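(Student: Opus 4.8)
The plan is to obtain this as an immediate specialization of the three-point inequality (Theorem \ref{th_three_point}) with the choice $R = P$. Recall that the divergence inner product is defined by $\<PQ\|RQ\> = \int (q(z)-r(z))\frac{\delta D(P\|Q)}{\delta q(z)}\,\mathrm{d}\mu(z)$, so setting $R = P$ gives precisely $\<PQ\|PQ\> = \int (q(z)-p(z))\frac{\delta D(P\|Q)}{\delta q(z)}\,\mathrm{d}\mu(z)$, which is the quantity appearing on the left-hand side of the claimed inequality.

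The key steps are as follows. First I would apply Theorem \ref{th_three_point} with the three points taken to be $P$, $Q$, and $R := P$. This yields $D(P\|P) \geq D(P\|Q) - \<PQ\|PQ\>$. Second, I would invoke the equality condition of the three-point inequality: equality holds if and only if $Q = R$, i.e., $Q = P$; since we are assuming $P \neq Q$, the inequality is strict, so $D(P\|P) > D(P\|Q) - \<PQ\|PQ\>$. Third, I would use the defining property of a divergence, $D(P\|P) = 0$ (which follows from $D(P\|Q) = 0 \iff P = Q$), to conclude $0 > D(P\|Q) - \<PQ\|PQ\>$, that is, $\<PQ\|PQ\> > D(P\|Q)$.

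There is essentially no serious obstacle here; the only point requiring a moment's care is verifying that the substitution $R = P$ is legitimate in Theorem \ref{th_three_point} and that the hypothesis $P \neq Q$ translates exactly into the strictness condition $Q \neq R$ needed to upgrade the three-point inequality to a strict one. Once that is checked, the corollary follows in two lines, and no further computation with the functional derivative is needed.
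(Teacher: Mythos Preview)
Your proposal is correct and matches the paper's own proof exactly: the paper simply substitutes $R=P$ into the three-point inequality (\ref{ineq_three_point_1}) and uses $D(P\|P)=0$ together with $P\neq Q$ for strictness.
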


\begin{proof}
By substituting $P=R$ into (\ref{ineq_three_point_1}), the result follows. 
\end{proof}

\begin{proposition}
Let $D$ be a differentiable and strictly convex divergence.
Let $P\in\P$, $Q,P_\ast\in\mathcal{S}\subset \P$ and $P_\ast=\argmin_{Q\in\mathcal{S}}D(P\|Q)$.

For all $Q\in\mathcal{S}$,
\begin{align}
 \<PP_\ast\|QP_\ast\> \leq 0.
\end{align}
\end{proposition}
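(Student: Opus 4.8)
The plan is to imitate the one-dimensional reduction already used in the proof of Theorem~\ref{th_three_point}: restrict $D(P\|\cdot)$ to the segment from $P_\ast$ toward $Q$ that stays inside $\mathcal{S}$, and read off a first-order optimality condition at the endpoint $P_\ast$. First I would dispose of the trivial case $Q=P_\ast$, in which $q(z)-p_\ast(z)=0$ $\mu$-a.e.\ and hence $\<PP_\ast\|QP_\ast\>=0$; so from now on assume $Q\neq P_\ast$. Then I would set $Q_\lambda\eqdef P_\ast+(Q-P_\ast)\lambda$ for $\lambda\in[0,1]$ and $F(\lambda)\eqdef D(P\|Q_\lambda)$. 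Since $\mathcal{S}$ is convex and $P_\ast,Q\in\mathcal{S}$, every $Q_\lambda$ lies in $\mathcal{S}$. By differentiability of $D$, $F$ is differentiable, and, taking $\phi(z)=q(z)-p_\ast(z)$ in the definition of the functional derivative and using $q_\lambda(z)+\epsilon\phi(z)=q_{\lambda+\epsilon}(z)$ exactly as in~(\ref{eq_cos_1}),
\begin{align}
F'(\lambda)=\int \frac{\delta D(P\|Q_\lambda)}{\delta q_\lambda(z)}\bigl(q(z)-p_\ast(z)\bigr)\dmu(z).
\end{align}

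Next I would exploit the optimality of $P_\ast$. Because $P_\ast$ minimizes $D(P\|\cdot)$ over $\mathcal{S}$ and each $Q_\lambda\in\mathcal{S}$, we have $F(\lambda)\geq F(0)$ for every $\lambda\in[0,1]$; that is, $F$ attains its minimum on $[0,1]$ at the left endpoint $\lambda=0$. Every difference quotient $\frac{F(\lambda)-F(0)}{\lambda}$ with $\lambda\in(0,1]$ is then nonnegative, so letting $\lambda\downarrow 0$ gives $F'(0)\geq 0$. (Lemma~\ref{lem_convex}, applied to the segment joining $P_\ast$ and $Q$, additionally shows $F$ is strictly convex, but convexity is not needed for this last step.) Evaluating the displayed formula at $\lambda=0$ gives $F'(0)=\int \frac{\delta D(P\|P_\ast)}{\delta p_\ast(z)}\bigl(q(z)-p_\ast(z)\bigr)\dmu(z)=-\<PP_\ast\|QP_\ast\>$, and combining this with $F'(0)\geq 0$ yields $\<PP_\ast\|QP_\ast\>\leq 0$, the desired inequality.

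The step I expect to be the main obstacle is the passage from global optimality of $P_\ast$ over $\mathcal{S}$ to the local inequality $F'(0)\geq 0$: it requires the whole segment $\{Q_\lambda:\lambda\in[0,1]\}$ to remain in $\mathcal{S}$, which is exactly where convexity of $\mathcal{S}$ enters (without it the statement fails already for a two-point set $\mathcal{S}$), and it requires $F$ to be right-differentiable at $0$, which is precisely the differentiability hypothesis on $D$ together with the interchange of $\frac{d}{d\epsilon}$ with the integral that is already taken for granted in the earlier proofs. Beyond that, everything is the same bookkeeping with functional derivatives and divergence inner products used in Theorem~\ref{th_three_point} and Proposition~\ref{prop_inner_product}.
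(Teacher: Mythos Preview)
Your argument is correct and follows essentially the same route as the paper: define $Q_\lambda=P_\ast+(Q-P_\ast)\lambda$, set $F(\lambda)=D(P\|Q_\lambda)$, use optimality of $P_\ast$ on the segment to get $F'(0)\geq 0$, and identify $F'(0)=-\<PP_\ast\|QP_\ast\>$. You are also right to flag that the step ``$F(\lambda)\geq F(0)$ for all $\lambda\in[0,1]$'' relies on $Q_\lambda\in\mathcal{S}$, i.e.\ on convexity of $\mathcal{S}$; the paper uses this implicitly without stating it as a hypothesis.
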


\begin{proof}
Let $Q_\lambda\eqdef  P_\ast+(Q-P_\ast)\lambda$ and $F(\lambda)\eqdef D(P\|Q_\lambda)$.
From the assumption, $F(0)$ is the minimum value for $\lambda\in[0,1]$.
Hence, $F'(0)\geq 0$ and we have 
\begin{align}
F'(0)=\int \frac{\delta D(P\|P_\ast)}{\delta p_\ast(z)}(q(z)-p_\ast(z))\dmu(z)=-\<PP_\ast\|QP_\ast\>\geq 0.
\end{align}
From this inequality, the result follows.
\end{proof}

This is the same approach to show the Pythagorean inequality for the KL-divergence 
 \cite{cover2012elements}.

\begin{corollary}
\label{cor_tangent_space}
Let $P, Q\in \P$ and let $R\in \L_\alpha$ for $\alpha\in(0,1)$.
Then, 
\begin{align}
\O(P:R)=\O(Q:R).
\end{align}
\end{corollary}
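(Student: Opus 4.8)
The plan is to unwind the two orthogonal subspaces $\O(P:R)$ and $\O(Q:R)$ directly from their definitions and show that the defining linear equation for one is a scalar multiple of the defining linear equation for the other, using the fact that $R\in\L_\alpha(P,Q)$. Recall $\O(P:R)=\{S\in\P\mid\langle PR\|SR\rangle=0\}$ and $\O(Q:R)=\{S\in\P\mid\langle QR\|SR\rangle=0\}$, so it suffices to show that for every $S\in\P$ we have $\langle PR\|SR\rangle=0$ if and only if $\langle QR\|SR\rangle=0$.

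The key step is Proposition~\ref{prop_inner_product}: since $R\in\L_\alpha(P,Q)$, for every $S\in\P$,
\begin{align}
(1-\alpha)\langle PR\|SR\rangle+\alpha\langle QR\|SR\rangle=0.
\end{align}
Because $\alpha\in(0,1)$, both coefficients $1-\alpha$ and $\alpha$ are strictly positive, so this linear relation lets us solve for one inner product in terms of the other:
\begin{align}
\langle PR\|SR\rangle=-\frac{\alpha}{1-\alpha}\langle QR\|SR\rangle.
\end{align}
Hence $\langle PR\|SR\rangle=0$ exactly when $\langle QR\|SR\rangle=0$, which gives the two sets the same membership condition and therefore $\O(P:R)=\O(Q:R)$.

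I expect the main (and only real) obstacle to be a bookkeeping one: the corollary as stated writes $R\in\L_\alpha$ without naming the pair, so I would first fix notation by taking $R\in\L_\alpha(P,Q)$, and I would note explicitly that the endpoint cases $\alpha=0$ and $\alpha=1$ are excluded precisely because then one coefficient vanishes and the equivalence breaks down — which is why the hypothesis is $\alpha\in(0,1)$. Everything else is immediate from Proposition~\ref{prop_inner_product} once that proposition is invoked with an arbitrary $S\in\P$.
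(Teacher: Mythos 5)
Your proposal is correct and is exactly the argument the paper intends: its proof consists of the single line ``The result follows from Proposition~\ref{prop_inner_product},'' and you have simply spelled out the step that since $\alpha\in(0,1)$ both coefficients in $(1-\alpha)\langle PR\|SR\rangle+\alpha\langle QR\|SR\rangle=0$ are nonzero, so the two membership conditions are equivalent. Your added remark about why the endpoints $\alpha=0,1$ must be excluded is a useful clarification but does not change the route.
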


\begin{proof}
The result follows from Proposition \ref{prop_inner_product}.
\end{proof}

\subsection{Centroids}
We consider the minimization problem of the weighted average of differentiable and strictly convex divergences.
This problem is important for the clustering algorithm and we show that the minimizer is a generalized centroid and the uniqueness of the minimizer if there exists the solution that satisfies the generalized centroid condition.

\begin{theorem}[Minimization of the weighted average of divergences]
\label{th_centroid}
Let $D$ be a differentiable and strictly convex divergence.
Let $P_i(i=1,2,\cdots N)\in \P$ and $\alpha_i(i=1,2,\cdots N)\in \mathbb{R}$ be parameters that satisfy $\sum_i \alpha_i =1$ and $\alpha_i\geq 0$.
Suppose that there exists $P_\ast\in \P$ that satisfies 
\begin{align}
\label{eq_centroid_1}
\sum_{i=1}^N \alpha_i \frac{\delta D(P_i\|P_\ast)}{\delta p_\ast(z)}=C,
\end{align}
where $C\in \mathbb{R}$ is the Lagrange multiplier.

Then, the minimizer of the weighted average of divergences
\begin{align}
\label{eq_centroid_2}
\argmin_{R\in \P} \sum_{i=1}^N \alpha_i D(P_i\|R)=P_\ast
\end{align}
is unique and $P_\ast$ is the unique solution of (\ref{eq_centroid_1}).
\end{theorem}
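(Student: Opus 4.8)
The plan is to derive the statement from a weighted version of the three-point inequality (Theorem~\ref{th_three_point}), mirroring the Pythagorean-type arguments already used above. First I would fix an arbitrary $R\in\P$ and invoke Theorem~\ref{th_three_point} with the substitution $P\mapsto P_i$, $Q\mapsto P_\ast$, which gives, for every $i$,
\begin{align}
D(P_i\|R)\geq D(P_i\|P_\ast)-\<P_iP_\ast\|RP_\ast\>,
\end{align}
with equality if and only if $R=P_\ast$. Multiplying the $i$-th inequality by $\alpha_i\geq 0$ and summing over $i$ yields
\begin{align}
\sum_{i=1}^N\alpha_i D(P_i\|R)\geq \sum_{i=1}^N\alpha_i D(P_i\|P_\ast)-\sum_{i=1}^N\alpha_i\<P_iP_\ast\|RP_\ast\>.
\end{align}

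The key step is to show that the last sum is zero. Using the definition of the divergence inner product and linearity in the first slot,
\begin{align}
\sum_{i=1}^N\alpha_i\<P_iP_\ast\|RP_\ast\>=\int (p_\ast(z)-r(z))\Biggl(\sum_{i=1}^N\alpha_i\frac{\delta D(P_i\|P_\ast)}{\delta p_\ast(z)}\Biggr)\dmu(z),
\end{align}
and inserting the hypothesis (\ref{eq_centroid_1}) this becomes $C\int(p_\ast(z)-r(z))\dmu(z)=C(1-1)=0$, since $\int p_\ast\dmu=\int r\dmu=1$. Consequently $\sum_i\alpha_i D(P_i\|R)\geq\sum_i\alpha_i D(P_i\|P_\ast)$, so $P_\ast$ attains the minimum in (\ref{eq_centroid_2}). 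For uniqueness, I would observe that if $R\neq P_\ast$ then Theorem~\ref{th_three_point} gives a strict inequality for each $i$; since $\sum_i\alpha_i=1$ at least one $\alpha_i$ is strictly positive, so after weighting and summing the inequality stays strict, giving $\sum_i\alpha_i D(P_i\|R)>\sum_i\alpha_i D(P_i\|P_\ast)$ for every $R\neq P_\ast$. This proves both (\ref{eq_centroid_2}) and the uniqueness of the minimizer.

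Finally, to obtain that $P_\ast$ is the unique solution of (\ref{eq_centroid_1}): if some $P_{\ast\ast}\in\P$ also satisfies (\ref{eq_centroid_1}), then running the computation above with $P_{\ast\ast}$ in place of $P_\ast$ shows that $P_{\ast\ast}$ is likewise a minimizer of $\sum_i\alpha_i D(P_i\|R)$; by the uniqueness of the minimizer just established, $P_{\ast\ast}=P_\ast$.

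I do not expect a genuine obstacle here; the two points requiring a little care are that the cross term vanishes only because of the normalization $\int p_\ast\dmu=\int r\dmu=1$ (so the constant $C$ is integrated against a difference of densities), and that the strict inequality is preserved under the weighting precisely because $\sum_i\alpha_i=1$ forces some $\alpha_i>0$; a vanishing weight $\alpha_i=0$ simply drops the corresponding term from both the objective and the equality condition and is harmless.
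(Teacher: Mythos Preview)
Your argument is correct and is essentially a repackaging of the paper's own proof. The paper argues directly from Lemma~\ref{lem_convex}: it sets $R_\lambda\eqdef P_\ast+(R-P_\ast)\lambda$, $F(\lambda)\eqdef\sum_i\alpha_i D(P_i\|R_\lambda)$, shows $F'(0)=0$ from (\ref{eq_centroid_1}) and the normalization $\int r\dmu=\int p_\ast\dmu=1$, and then uses strict convexity of $F$ to get $F(1)>F(0)$; the uniqueness clause for (\ref{eq_centroid_1}) is handled exactly as you do. You instead invoke Theorem~\ref{th_three_point} term by term and sum---but since the proof of Theorem~\ref{th_three_point} is precisely the single-term version of the same $F(\lambda)$ argument, the two routes are the same computation organized differently (sum-then-convexify versus convexify-then-sum). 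Your version has the mild advantage of reusing an already-proved result rather than rerunning the convexity step; the paper's version keeps the proof self-contained from Lemma~\ref{lem_convex}.
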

\begin{proof}
We first prove (\ref{eq_centroid_2}).
Let $R\neq P_\ast$ be an arbitrary probability measure.
Let $R_\lambda\eqdef P_\ast + (R-P_\ast)\lambda$ and $F(\lambda)=\sum_{i=1}^N \alpha_i D(P_i\|R_\lambda)$.
By differentiating $F(\lambda)$ with respect to $\lambda$ and substituting $\lambda=0$, it follows that 
\begin{align}
F'(0)=\sum_{i=1}^N \alpha_i \int \frac{\delta D(P_i\|P_\ast)}{\delta p_\ast(z)}(r(z)-p_\ast(z))\dmu(z)=0,
\end{align}
where we use (\ref{eq_centroid_1}), $\int r\dmu=\int p_\ast\dmu=1$ and the definition of the functional derivative.
From Lemma \ref{lem_convex} and $\alpha_i\in[0,1]$, $F(\lambda)$ is strictly convex with respect to $\lambda$.
Hence, we have $F(1)>F(0) +F'(0)(1-0)=F(0)$.
Since $R$ is an arbitrary probability measure, from $F(1)=\sum_{i=1}^N \alpha_i D(P_i\|R)$ and  $F(0)=\sum_{i=1}^N \alpha_i D(P_i\|P_\ast)$, it follows that $P_\ast$ is the unique minimizer.
If there exists an another probability measure $\tilde{P_\ast}$ that is the solution of (\ref{eq_centroid_1}), $\tilde{P_\ast}$ is also a minimizer of (\ref{eq_centroid_2}).
This contradicts that $P_\ast$ is the unique minimizer.
Hence, the result follows. 
\end{proof}
The equality (\ref{eq_centroid_1}) is the generalized centroid condition.

\begin{proposition}
\label{prop_line}
Let $P\neq Q\in \P$ and suppose that $P_\ast\in \L_\alpha(P:Q)$ for $\alpha\in [0,1]$.
Then, $\L_\alpha(P:Q)=\{P_\ast\}$ and $\L_0(P:Q)=\{P\},\L_1(P:Q)=\{Q\}$.
\end{proposition}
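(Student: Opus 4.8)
The plan is to interpret the set $\L_\alpha(P:Q)$ as the set of minimizers of a weighted average of divergences, so that the uniqueness statement follows directly from Theorem~\ref{th_centroid}. Recall that $R\in\L_\alpha(P:Q)$ means precisely that there exists $C\in\mathbb{R}$ with
\begin{align}
(1-\alpha)\frac{\delta D(P\|R)}{\delta r(z)}+\alpha\frac{\delta D(Q\|R)}{\delta r(z)}=C. \nonumber
\end{align}
This is exactly the generalized centroid condition (\ref{eq_centroid_1}) for the two probability measures $P_1=P$, $P_2=Q$ with weights $\alpha_1=1-\alpha$, $\alpha_2=\alpha$, which are nonnegative and sum to $1$. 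Hence the first step is: given the hypothesis that some $P_\ast\in\L_\alpha(P:Q)$ exists, apply Theorem~\ref{th_centroid} with $N=2$ to conclude that $P_\ast$ is the \emph{unique} solution of that stationarity equation. Since membership in $\L_\alpha(P:Q)$ is equivalent to solving that equation, this gives $\L_\alpha(P:Q)=\{P_\ast\}$ immediately.

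For the boundary cases, I would argue directly. When $\alpha=0$ the defining condition reads $\frac{\delta D(P\|R)}{\delta r(z)}=C$, i.e. $R$ is a stationary point of $R\mapsto D(P\|R)$ subject to $\int r\,\dmu=1$. Clearly $R=P$ satisfies this, since $D(P\|\cdot)$ attains its minimum value $0$ at $P$ and, by strict convexity together with Lemma~\ref{lem_convex}, $F(\lambda)\eqdef D(P\|P+(R-P)\lambda)$ is strictly convex with $F(0)=0$ its unique minimum, forcing $F'(0)\geq 0$ for every competitor and in fact the stationarity equation to single out $P$. So $P\in\L_0(P:Q)$, and then the uniqueness just established (now with $P_\ast=P$) gives $\L_0(P:Q)=\{P\}$. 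The case $\alpha=1$ is symmetric, with $Q$ in place of $P$, giving $\L_1(P:Q)=\{Q\}$.

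The only real subtlety is making the reduction to Theorem~\ref{th_centroid} airtight: that theorem is stated as "suppose there exists $P_\ast$ satisfying (\ref{eq_centroid_1}), then the minimizer is $P_\ast$ and it is the unique solution of (\ref{eq_centroid_1})." The present proposition hands us exactly such a $P_\ast$ (namely the assumed element of $\L_\alpha(P:Q)$), so the hypothesis of the theorem is met and its conclusion—uniqueness of the solution of the stationarity equation—is precisely $\L_\alpha(P:Q)=\{P_\ast\}$. For the endpoint claims, the one point to check carefully is that $P$ (resp. $Q$) genuinely satisfies the stationarity equation with some constant $C$; this is where I would invoke that $D(P\|\cdot)\geq 0$ with equality only at $P$, combined with Lemma~\ref{lem_convex}, rather than differentiating $D(P\|\cdot)$ at its minimum in an unjustified way. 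I expect this endpoint verification to be the main (minor) obstacle; everything else is a direct citation of Theorem~\ref{th_centroid}.
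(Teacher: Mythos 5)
Your reduction of the first claim to Theorem~\ref{th_centroid} with $N=2$, $P_1=P$, $P_2=Q$ and weights $(1-\alpha,\alpha)$ is exactly the paper's argument, and it is correct: membership in $\L_\alpha(P:Q)$ is precisely the stationarity condition (\ref{eq_centroid_1}), the assumed $P_\ast$ supplies the existence hypothesis of that theorem, and its uniqueness conclusion gives $\L_\alpha(P:Q)=\{P_\ast\}$.

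The endpoint cases are where you diverge from the paper, and your route has a genuine gap. You try to prove the \emph{forward} inclusion $P\in\L_0(P:Q)$ by arguing that $P$, being the minimizer of $D(P\|\cdot)$, must satisfy $\frac{\delta D(P\|P)}{\delta p(z)}=C$. But first-order optimality over $\P$ only yields $\int\frac{\delta D(P\|P)}{\delta p(z)}\,(r(z)-p(z))\,\dmu(z)\geq 0$ for admissible directions, and since $r\geq 0$ you cannot perturb in both signs where $p=0$; this gives at best a KKT-type statement (functional derivative bounded below by $C$, with equality only on the support of $p$), not the exact equality everywhere that the definition of $\L_0$ demands. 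You flag this yourself as an obstacle, and it is one. It is also unnecessary: the proposition already \emph{assumes} that $\L_0(P:Q)$ contains some $P_\ast$, so the logic should run the other way, as the paper's does. Theorem~\ref{th_centroid} applied with $\alpha=0$ says that this assumed $P_\ast$ is the unique minimizer of $R\mapsto D(P\|R)$ over $\P$; since $D(P\|R)\geq 0$ with equality if and only if $R=P$, that minimizer is $P$, hence $P_\ast=P$ and $\L_0(P:Q)=\{P_\ast\}=\{P\}$. The case $\alpha=1$ is identical with $Q$ in place of $P$. With that correction your proof coincides with the paper's.
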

\begin{proof}
By applying Theorem \ref{th_centroid} for $N=2$ and putting $P_1=P, P_2=Q$, it follows that $P_\ast$ is the unique solution of 
\begin{align}
\label{eq_line_1}
(1-\alpha)\frac{\delta D(P\|P_\ast)}{\delta p_\ast(z)}+\alpha\frac{\delta D(Q\|P_\ast)}{\delta p_\ast(z)}=C,
\end{align}
where $C\in\mathbb{R}$ is the Lagrange multiplier.
From the definition of $\L_\alpha(P:Q)$, the result $\L_\alpha(P:Q)=\{P_\ast\}$ follows. 

Next, we show that $\L_0(P:Q)=\{P\}$.
From Theorem \ref{th_centroid}, it follows that $\argmin_{R\in \P} D(P\|R)=P_\ast$.
Since $D(P\|R) \geq 0$ and $P_\ast$ is unique, we have $P_\ast=P$.
We also have the result $\L_1(P:Q)=\{Q\}$ in the same way.
\end{proof}
We can show the same theorem for the real-valued vector of $\mathbb{R}^d$.

\begin{proposition}
\label{th_centroid_vect}
Let $p_i(i=1,2,\cdots N)\in \mathbb{R}^d$ and $\alpha_i(i=1,2,\cdots N)\in\mathbb{R}$ be parameters that satisfy $\sum_i \alpha_i =1$ and $\alpha_i\geq 0$.
Let $D: \mathbb{R}^d \times\mathbb{R}^d\rightarrow[0,\infty) $ be a differentiable and strictly convex divergence.
Suppose that there exists $p_\ast\in \mathbb{R}^d$ that satisfies 
\begin{align}
\label{eq_centroid2_1}
\sum_{i=1}^N \alpha_i \frac{\partial D(p_i\|p_\ast)}{\partial p_{\ast,\nu}}=0,
\end{align}
where $\nu=\{1,2,\cdots, d\}$ and $\{p_{\ast,\nu}\}$ are components of the vector $p_\ast$. 

Then, the minimizer of the weighted average of divergences
\begin{align}
\label{eq_centroid2_2}
\argmin_{r\in \mathbb{R}^d} \sum_{i=1}^N \alpha_i D(p_i\|r)=p_\ast
\end{align}
is unique and and $p_\ast$ is the unique solution of (\ref{eq_centroid2_1}).
\end{proposition}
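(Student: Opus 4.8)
The plan is to mirror the proof of Theorem \ref{th_centroid} essentially verbatim, replacing functional derivatives by partial derivatives and integrals against $r-p_\ast$ by the ordinary Euclidean inner product $\langle \nabla_r D(p_i\|p_\ast), r-p_\ast\rangle$. First I would note that the analogue of Lemma \ref{lem_convex} holds in $\mathbb{R}^d$: if $D(p\|\cdot)$ is strictly convex on $\mathbb{R}^d$ and $q_\lambda \eqdef p + (q-p)\lambda$ for fixed $p\neq q$, then $\lambda\mapsto D(p\|q_\lambda)$ is strictly convex, by exactly the same computation (convex combinations of $\lambda$-values map to convex combinations of the $q_\lambda$'s because the parametrization is affine). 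This is the only ingredient from the divergence side that is needed.

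Next I would fix an arbitrary $r\in\mathbb{R}^d$ with $r\neq p_\ast$, set $r_\lambda \eqdef p_\ast + (r-p_\ast)\lambda$ and $F(\lambda)\eqdef \sum_{i=1}^N \alpha_i D(p_i\|r_\lambda)$. Differentiating and using the chain rule gives
\begin{align}
F'(0)=\sum_{i=1}^N \alpha_i \sum_{\nu=1}^d \frac{\partial D(p_i\|p_\ast)}{\partial p_{\ast,\nu}}(r_\nu - p_{\ast,\nu})
= \sum_{\nu=1}^d (r_\nu - p_{\ast,\nu})\sum_{i=1}^N \alpha_i \frac{\partial D(p_i\|p_\ast)}{\partial p_{\ast,\nu}} = 0,
\end{align}
where the last equality is precisely the hypothesis (\ref{eq_centroid2_1}). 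Since each $\alpha_i\in[0,1]$ (they are nonnegative and sum to $1$), $F$ is a nonnegative combination of strictly convex functions with at least one strictly convex summand having $\alpha_i>0$ — or more simply, since $\sum\alpha_i=1$ at least one $\alpha_i>0$, and for that $i$ we have $p_i\neq p_\ast$ or not; if $p_i = p_\ast$ for every $i$ with $\alpha_i>0$ then $p_\ast$ is trivially the minimizer. Otherwise $F$ is strictly convex, hence $F(1) > F(0) + F'(0)(1-0) = F(0)$, i.e. $\sum_i\alpha_i D(p_i\|r) > \sum_i \alpha_i D(p_i\|p_\ast)$. As $r$ was arbitrary, $p_\ast$ is the unique minimizer of (\ref{eq_centroid2_2}); and any other solution $\tilde p_\ast$ of (\ref{eq_centroid2_1}) would, by the same argument, be a minimizer, contradicting uniqueness, so (\ref{eq_centroid2_1}) has a unique solution.

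The one point that needs a little care — and the only place this differs from the $\P$-valued case — is that there is no normalization constraint $\int r\,\dmu = 1$ here, so there is no Lagrange multiplier: the right-hand side of (\ref{eq_centroid2_1}) is $0$ rather than a constant $C$. This is why the inner-product term $\sum_\nu (r_\nu - p_{\ast,\nu})\cdot(\text{RHS})$ vanishes on the nose without invoking $\sum_\nu(r_\nu - p_{\ast,\nu}) = 0$. I do not anticipate a real obstacle; the main thing to get right is simply to cite the $\mathbb{R}^d$ version of Lemma \ref{lem_convex} (which is immediate) and to handle the degenerate case where all weighted points coincide with $p_\ast$. Everything else is a transcription of the proof of Theorem \ref{th_centroid}.
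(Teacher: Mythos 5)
Your proposal is correct and matches the paper, which simply states that the proof of this proposition is the same as that of Theorem \ref{th_centroid}; your transcription (partial derivatives in place of functional derivatives, no normalization constraint and hence no Lagrange multiplier $C$) is exactly the intended argument. The degenerate-case split is harmless but unnecessary: strict convexity of $\lambda\mapsto D(p_i\|r_\lambda)$ only needs the two endpoints $p_\ast\neq r$ of the segment to be distinct, not $p_i\neq p_\ast$, so $F$ is strictly convex in every case.
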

The proof is the same as Theorem \ref{th_centroid}.

For the Bregman divergence, $D_{\mathrm{B}}(p\|q)=f^*(q^*)-f^*(p^*)-\sum_\nu \frac{\partial f^*(p^*)}{\partial p^*_\nu}(q^*_\nu-p^*_\nu)$ holds \cite{nielsen2011burbea}, where $f:\mathbb{R}^d\rightarrow \mathbb{R}$ is a differentiable and strictly convex function, $f^*$ denotes the Legendre convex conjugate \cite{censor1997parallel} and $p^*_\nu=\frac{\partial f(p)}{\partial p_\nu}$. Since $f^*(q^*)-f^*(p^*)-\sum_\nu \frac{\partial f^*(p^*)}{\partial p^*_\nu}(q^*_\nu-p^*_\nu)=f(p)-f(q)-\sum_\nu \frac{\partial f(q)}{\partial q_\nu}(p_\nu-q_\nu)>0$ for $p\neq q$, $f^*$ is also strictly convex. Hence, $D_{\mathrm{B}}(p\|q)$ is a differentiable and strictly convex divergence with respect to $q^*$.
By combining $\frac{\partial f^*(q^*)}{\partial q^*_\nu}=q_\nu$ and (\ref{eq_centroid2_1}), it follows that 
\begin{align}
\sum_{i=1}^N \alpha_i \frac{\partial D_{\mathrm{B}}(p_i\|p_\ast)}{\partial p^\ast_{\ast, \nu}}=\sum_{i=1}^N \alpha_i (p_{\ast, \nu}-p_{i,\nu})=0.
\end{align}
Hence, $p_\ast=\sum_{i=1}^N \alpha_i p_i$ is the unique minimizer.
\subsection{Projection from a probability measure to a set}
In this subsection, we discuss the minimization problems of divergence between a probability measure and a set such as a orthogonal subset, a divergence ball or a set subject to linear moment-like constraint, we show the minimizer conditions and the uniqueness of the minimizer if there exist the solutions that satisfy the minimizer conditions.
\begin{corollary}[Minimization of the divergence between a probability measure and a orthogonal subspace]
\label{cor_orthogonal_subspace}
Let $D$ be a differentiable and strictly convex divergence and let $P,Q\in\P$.

Then, the minimizer of divergence between a probability measure $P$ and the orthogonal subspace $\O(P:Q)$
\begin{align}
\argmin_{R\in \O(P:Q)} D(P\|R)=Q
\end{align}
is unique.
\end{corollary}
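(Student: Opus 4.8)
The plan is to reduce this corollary to the three-point inequality (Theorem~\ref{th_three_point}). The key observation is that $Q$ itself lies in $\O(P:Q)$: indeed $\<PQ\|QQ\>=\int(q(z)-q(z))\frac{\delta D(P\|Q)}{\delta q(z)}\,\dmu(z)=0$, so $Q\in\O(P:Q)$ and the set over which we minimize is nonempty and actually contains $Q$. Thus it suffices to show that for every $R\in\O(P:Q)$ we have $D(P\|R)\geq D(P\|Q)$, with equality only when $R=Q$.

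First I would take an arbitrary $R\in\O(P:Q)$, so by definition $\<PQ\|RQ\>=0$. Then I would apply Theorem~\ref{th_three_point} with the points $P$, $Q$, $R$ in exactly that role, giving
\begin{align}
D(P\|R)\geq D(P\|Q)-\<PQ\|RQ\>=D(P\|Q),
\end{align}
where the inequality is strict unless $Q=R$. This immediately shows $Q$ achieves the minimum of $D(P\|\cdot)$ over $\O(P:Q)$ and that any other point of the orthogonal subspace gives a strictly larger value, so the minimizer is unique and equals $Q$.

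The argument is essentially a one-line consequence of the three-point inequality, so there is no serious obstacle; the only things to be careful about are (i) checking that $Q$ is genuinely an element of $\O(P:Q)$ so that the claimed minimizer is admissible (done above via the definition of the divergence inner product), and (ii) noting that the equality case in Theorem~\ref{th_three_point} is exactly ``$Q=R$,'' which is what delivers uniqueness rather than mere attainment. No convexity lemma or Lagrange-multiplier machinery is needed beyond what Theorem~\ref{th_three_point} already packaged.
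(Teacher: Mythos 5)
Your proof is correct and is exactly the paper's argument: the paper's own proof is the one-line observation that $\<PQ\|RQ\>=0$ for $R\in\O(P:Q)$ combined with Theorem~\ref{th_three_point}, and you simply spell out the same reduction, additionally noting the (true and worth stating) fact that $Q\in\O(P:Q)$. No differences of substance.
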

\begin{proof}
Since $\<PQ\|RQ\>=0$ for $R\in \O(P:Q)$, the result follows from Theorem \ref{th_three_point}.
\end{proof}

\begin{theorem}[Minimization of the divergence between a probability measure and a divergence ball]
\label{th_divergence_ball}
Let $D$ be a differentiable and strictly convex divergence.
Let $P, Q\in \P$ and suppose that $P_\ast\in \partial\B_\kappa(P)\cap \L(P:Q)$.

Then, the minimizer of divergence between a probability measure $Q$ and a divergence ball $\B_\kappa(P)$
\begin{align}
\argmin_{R\in \B_\kappa(P)} D(Q\|R)=P_\ast
\end{align}
is unique.
\end{theorem}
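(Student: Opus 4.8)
The plan is to reuse the variational template of Theorems~\ref{th_three_point} and \ref{th_centroid}: restrict $D(Q\|\cdot)$ to a segment issuing from the candidate minimizer $P_\ast$, observe that this restriction is one-dimensional and strictly convex by Lemma~\ref{lem_convex}, and show that its derivative at $P_\ast$ is nonnegative. The nonnegativity will come from two ingredients: the defining identity of $\L(P:Q)$ at $P_\ast$ (which plays the role of the Lagrange/KKT stationarity condition, the coefficient $\alpha$ being the multiplier of the ball constraint), and the three-point inequality combined with the fact that competitors lie inside $\B_\kappa(P)$.

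Concretely, I would first unfold $P_\ast\in\L(P:Q)$: there exist $\alpha\in[0,1]$ and $C\in\mathbb{R}$ with
\begin{align}
(1-\alpha)\frac{\delta D(P\|P_\ast)}{\delta p_\ast(z)}+\alpha\frac{\delta D(Q\|P_\ast)}{\delta p_\ast(z)}=C. \nonumber
\end{align}
If $\alpha=0$ this forces $\frac{\delta D(P\|P_\ast)}{\delta p_\ast(z)}=C$, hence $P_\ast=P$ by Proposition~\ref{prop_line}, so $\kappa=D(P\|P_\ast)=0$, $\B_0(P)=\{P\}=\{P_\ast\}$, and the claim is immediate; so assume $\alpha\in(0,1]$. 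Then fix an arbitrary $R\in\B_\kappa(P)$ with $R\neq P_\ast$, put $R_\lambda\eqdef P_\ast+(R-P_\ast)\lambda$ and $F(\lambda)\eqdef D(Q\|R_\lambda)$ for $\lambda\in[0,1]$; by Lemma~\ref{lem_convex}, $F$ is strictly convex. Using the definition of the functional derivative as in (\ref{eq_cos_1}),
\begin{align}
F'(0)=\int \frac{\delta D(Q\|P_\ast)}{\delta p_\ast(z)}\bigl(r(z)-p_\ast(z)\bigr)\dmu(z), \nonumber
\end{align}
and substituting $\frac{\delta D(Q\|P_\ast)}{\delta p_\ast(z)}=\frac{1}{\alpha}\bigl(C-(1-\alpha)\frac{\delta D(P\|P_\ast)}{\delta p_\ast(z)}\bigr)$ together with $\int(r-p_\ast)\dmu=0$ collapses this to
\begin{align}
F'(0)=\frac{1-\alpha}{\alpha}\,\<PP_\ast\|RP_\ast\>. \nonumber
\end{align}

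To finish, apply the three-point inequality (Theorem~\ref{th_three_point}) with the roles $P,P_\ast,R$, giving $\<PP_\ast\|RP_\ast\>\ge D(P\|P_\ast)-D(P\|R)=\kappa-D(P\|R)\ge 0$, the last step because $R\in\B_\kappa(P)$. Since $\tfrac{1-\alpha}{\alpha}\ge 0$, we get $F'(0)\ge 0$, and strict convexity of $F$ yields $D(Q\|R)=F(1)>F(0)+F'(0)\ge F(0)=D(Q\|P_\ast)$. As $R$ was an arbitrary point of the ball distinct from $P_\ast$, and $P_\ast\in\partial\B_\kappa(P)\subset\B_\kappa(P)$, it follows that $P_\ast$ is the unique minimizer. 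I expect the only real care points to be the sign/normalization bookkeeping in the substitution that produces $F'(0)=\tfrac{1-\alpha}{\alpha}\<PP_\ast\|RP_\ast\>$, the verification that the three-point inequality enters with the correct orientation so that the ball constraint makes this quantity nonnegative, and keeping the degenerate case $\alpha=0$ (equivalently $\kappa=0$) separate so that dividing by $\alpha$ is legitimate; everything else is routine.
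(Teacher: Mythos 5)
Your proof is correct, and it arrives at exactly the same quantitative bound as the paper, but the packaging is different. The paper forms the Lagrangian-weighted objective $F(\lambda)=(1-\alpha)D(P\|R_\lambda)+\alpha D(Q\|R_\lambda)$, shows $F'(0)=0$ directly from the $\L_\alpha$ condition, concludes $F(1)>F(0)$ by strict convexity, and only at the end uses $D(P\|R)\leq\kappa=D(P\|P_\ast)$ to peel off the $D(P\|\cdot)$ terms and isolate $D(Q\|R)>D(Q\|P_\ast)$. You instead keep $F(\lambda)=D(Q\|R_\lambda)$ alone, use the line condition to rewrite $F'(0)=\frac{1-\alpha}{\alpha}\<PP_\ast\|RP_\ast\>$, and import the ball constraint through Theorem~\ref{th_three_point} to get $\<PP_\ast\|RP_\ast\>\geq\kappa-D(P\|R)\geq 0$; your sign bookkeeping checks out against the paper's definition of the divergence inner product. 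The two arguments are algebraically equivalent (both reduce to $D(Q\|R)-D(Q\|P_\ast)>\frac{1-\alpha}{\alpha}(\kappa-D(P\|R))$), but yours makes the first-order optimality structure explicit and reuses the three-point inequality, at the cost of an extra division by $\alpha$, while the paper's version is self-contained in a single convexity step. Your separate treatment of $\alpha=0$ (forcing $P_\ast=P$ and $\kappa=0$) matches the paper's dismissal of the trivial case $\kappa=0$, and your restriction to $R\neq P_\ast$ before invoking Lemma~\ref{lem_convex} is the right care point.
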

\begin{proof}
Since the case $\kappa=0$ is trivial, we consider the case $\kappa>0$.
By combining Proposition \ref{prop_line} and $P_\ast\neq P$, it follows that $\alpha > 0$.
Consider an arbitrary probability measure $R\in \B_\kappa(P)$.
From the assumption and Proposition \ref{prop_line}, there exists $\alpha \in(0,1]$ and $\L_\alpha(P:Q)=\{P_\ast\}$.
Let $R_\lambda\eqdef P_\ast + (R-P_\ast)\lambda$ and $F(\lambda)=(1-\alpha)D(P\|R_\lambda)+\alpha D(Q\|R_\lambda)$.

By differentiating $F(\lambda)$ with respect to $\lambda$ and substituting $\lambda=0$, it follows that 
\begin{align}
F'(0)=\int \biggl((1-\alpha)\frac{\delta D(P\|P_\ast)}{\delta p_\ast(z)}+\alpha\frac{\delta D(Q\|P_\ast)}{\delta p_\ast(z)}\biggr)(r(z)-p_\ast(z))\dmu(z)=0,
\end{align}
where we use $\L_\alpha(P,Q)=\{P_\ast\}$, $\int r\dmu=\int p_\ast\dmu=1$ and the definition of the functional derivative.
From Lemma \ref{lem_convex} and $\alpha\in(0,1]$, $F(\lambda)$ is strictly convex with respect to $\lambda$.
Hence, we have $F(1)>F(0) +F'(0)(1-0)=F(0)$.
From $F(1)=(1-\alpha)D(P\|R)+\alpha D(Q\|R)$ and $F(0)=(1-\alpha)D(P\|P_\ast)+\alpha D(Q\|P_\ast)$, it follows that 
\begin{align}
(1-\alpha)D(P\|R)+\alpha D(Q\|R)>(1-\alpha)D(P\|P_\ast)+\alpha D(Q\|P_\ast).
\end{align}
From $D(P\|R) \leq \kappa $, $D(P\|P_\ast)=\kappa$ and $\alpha > 0$, it follows that 
\begin{align}
D(Q\|R)>D(Q\|P_\ast).
\end{align}

Since $R$ is an arbitrary probability measure in $\B_\kappa(P)$, the result follows. 
\end{proof}
%

The next corollary follows from Theorem \ref{th_divergence_ball}.
\begin{corollary}
\label{cor_tangent_balls}
Let $P, Q\in \P$ and $P_\ast\in\partial\B_{\kappa_1}(P)\cap \partial\B_{\kappa_2}(Q) \cap \L(P:Q)$.
Then, $\B_{\kappa_1}(P)\cap\B_{\kappa_2}(Q)=\{P_\ast\}$.
\end{corollary}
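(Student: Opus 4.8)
The plan is to reduce the claim to Theorem~\ref{th_divergence_ball}, which already packages the geometry of a ball and a divergence line into an extremal statement. First I would check the easy inclusion $\{P_\ast\}\subseteq\B_{\kappa_1}(P)\cap\B_{\kappa_2}(Q)$: the hypothesis $P_\ast\in\partial\B_{\kappa_1}(P)\cap\partial\B_{\kappa_2}(Q)$ means $D(P\|P_\ast)=\kappa_1$ and $D(Q\|P_\ast)=\kappa_2$, so both inequalities defining the two balls hold (with equality) at $P_\ast$. It then remains only to prove the reverse inclusion, i.e.\ that every point of the intersection equals $P_\ast$.

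For the reverse inclusion I would invoke Theorem~\ref{th_divergence_ball} with $\kappa=\kappa_1$, noting that its hypotheses are exactly those supplied here, namely $P_\ast\in\partial\B_{\kappa_1}(P)\cap\L(P:Q)$. The conclusion is that $P_\ast$ is the \emph{unique} minimizer of $R\mapsto D(Q\|R)$ over $R\in\B_{\kappa_1}(P)$; in particular $D(Q\|R)\geq D(Q\|P_\ast)=\kappa_2$ for every $R\in\B_{\kappa_1}(P)$, with equality only when $R=P_\ast$. Now take any $R\in\B_{\kappa_1}(P)\cap\B_{\kappa_2}(Q)$. From $R\in\B_{\kappa_1}(P)$ we get $D(Q\|R)\geq\kappa_2$, while from $R\in\B_{\kappa_2}(Q)$ we get $D(Q\|R)\leq\kappa_2$; hence $D(Q\|R)=\kappa_2=D(Q\|P_\ast)$, and the uniqueness of the minimizer forces $R=P_\ast$. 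Combining the two inclusions gives $\B_{\kappa_1}(P)\cap\B_{\kappa_2}(Q)=\{P_\ast\}$.

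I do not expect a genuine obstacle, since the statement is little more than a rephrasing of Theorem~\ref{th_divergence_ball}; the only points deserving a sentence of care are the degenerate ones. If $\kappa_1=0$ then $\B_{\kappa_1}(P)=\{P\}$ and $P_\ast=P$, so the intersection is trivially $\{P\}=\{P_\ast\}$ once one notes $D(Q\|P_\ast)=\kappa_2$ places $P$ in $\B_{\kappa_2}(Q)$; and if $P=Q$ one checks directly that $P_\ast=P=Q$ with $\kappa_1=\kappa_2=0$, so the conclusion is immediate. In the remaining case $\kappa_1>0$ with $P\neq Q$, the verification that the relevant mixing parameter on $\L(P:Q)$ is strictly positive was already carried out inside the proof of Theorem~\ref{th_divergence_ball}, so nothing further is needed. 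By symmetry one could equally run the same argument exchanging $(P,\kappa_1)$ with $(Q,\kappa_2)$, matching the Euclidean picture of two spheres tangent along the line through their centers.
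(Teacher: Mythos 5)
Your proof is correct and follows exactly the route the paper intends: the paper gives no details, stating only that the corollary ``follows from Theorem~\ref{th_divergence_ball},'' and your argument (apply the theorem to $\B_{\kappa_1}(P)$ so that $D(Q\|R)>D(Q\|P_\ast)=\kappa_2$ for all $R\in\B_{\kappa_1}(P)\setminus\{P_\ast\}$, hence no such $R$ lies in $\B_{\kappa_2}(Q)$) is precisely the intended one. Your extra care with the degenerate cases $\kappa_1=0$ and $P=Q$ goes beyond what the paper records and is welcome.
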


We show the figure that summarize Corollary \ref{cor_tangent_space}, \ref{cor_orthogonal_subspace} and  \ref{cor_tangent_balls}.

\begin{figure}[H]
\begin{center}
\includegraphics[width=70mm, height =70mm]{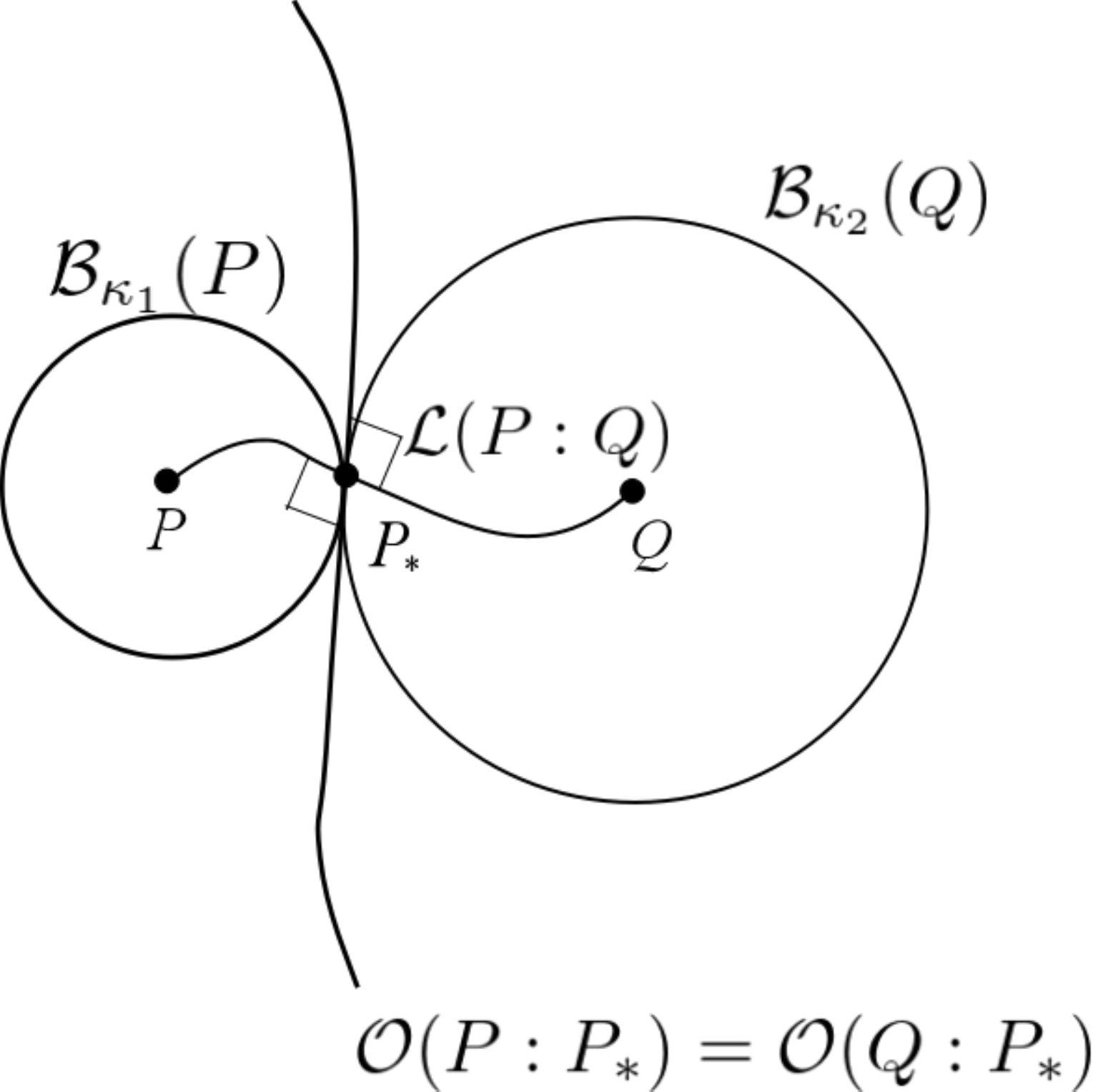}
\caption{Relation among two divergence balls, a divergence line and an orthogonal subspace.}
\label{fig:two}
\end{center}
\end{figure}

\begin{theorem}[Minimization of the divergence subject to linear moment-like constraint]
Let $D$ be a differentiable and strictly convex divergence and let $P\in\P$.
Let $T_i(Z)(i=1,2,\cdots K)$ be functions of a random variable (vector) $Z$ and $\mathcal{M}\eqdef\{Q\in \P|\mathrm{E}[T_i(Z)]=m_i (i=1,2,\cdots K), Z\sim Q\}$, where $\mathrm{E}[\cdot]$ denotes the expected value and $m_i  (i=1,2,\cdots K)\in\mathbb{R}$ are constants.

Suppose that there exists $P_\ast\in \M$ that satisfies 
\begin{align}
\label{eq_moments_1}
\frac{\delta D(P\|P_\ast)}{\delta p_\ast(z)}+\sum_{i=1}^K \beta_i T_i(z) =C,
\end{align}
where $\beta_i \in \mathbb{R}$ and $C\in\mathbb{R}$ are the Lagrange multipliers.

Then, the minimizer of divergence between a probability measure $P$ and the set $\M$ 
\begin{align}
\label{eq_moments_2}
\argmin_{R\in \M} D(P\|R)=P_\ast
\end{align}
is unique and $P_\ast$ is the unique solution of (\ref{eq_moments_1}).
\end{theorem}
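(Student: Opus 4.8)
The plan is to mimic the proof of Theorem \ref{th_divergence_ball} (and of Theorem \ref{th_centroid}), exploiting strict convexity along the segment joining $P_\ast$ to an arbitrary competitor. First I would take an arbitrary $R\in\M$ with $R\neq P_\ast$ and set $R_\lambda\eqdef P_\ast+(R-P_\ast)\lambda$ for $\lambda\in[0,1]$. Note that because both $R$ and $P_\ast$ lie in $\M$ and the constraints $\mathrm{E}[T_i(Z)]=m_i$ are affine in the density, every $R_\lambda$ also lies in $\M$; in particular $\int T_i(z)(r(z)-p_\ast(z))\,\dmu(z)=m_i-m_i=0$ for each $i$. Define $F(\lambda)\eqdef D(P\|R_\lambda)$.

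Next I would differentiate $F$ at $\lambda=0$ using the definition of the functional derivative with $\phi(z)=r(z)-p_\ast(z)$, obtaining
\begin{align}
F'(0)=\int \frac{\delta D(P\|P_\ast)}{\delta p_\ast(z)}\bigl(r(z)-p_\ast(z)\bigr)\,\dmu(z).
\end{align}
Now substitute the stationarity condition (\ref{eq_moments_1}), i.e. $\frac{\delta D(P\|P_\ast)}{\delta p_\ast(z)}=C-\sum_i \beta_i T_i(z)$. The constant term contributes $C\int (r-p_\ast)\,\dmu=C(1-1)=0$ since both are probability densities, and each term $\beta_i\int T_i(z)(r(z)-p_\ast(z))\,\dmu(z)=0$ by the moment constraint just noted. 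Hence $F'(0)=0$. By Lemma \ref{lem_convex}, $F(\lambda)$ is strictly convex in $\lambda$, so $F(1)>F(0)+F'(0)\cdot 1=F(0)$, which says $D(P\|R)>D(P\|P_\ast)$. Since $R\in\M$ was arbitrary, $P_\ast$ is the unique minimizer in (\ref{eq_moments_2}). Finally, if some other $\tilde P_\ast\in\M$ also solved (\ref{eq_moments_1}), the same argument would make $\tilde P_\ast$ a minimizer of (\ref{eq_moments_2}), contradicting uniqueness of the minimizer; hence $P_\ast$ is the unique solution of (\ref{eq_moments_1}).

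The only genuinely delicate point is the claim that the whole segment $R_\lambda$ stays inside $\M$, which requires $\M$ to be convex; this is immediate because the constraints $\mathrm{E}[T_i(Z)]=m_i$ are linear in the probability measure $Q$ (equivalently in the density $q$), exactly the ``linear moment-like'' structure in the hypothesis — this is why $\M$ is the right kind of constraint set for the Lagrange-multiplier form of the stationarity condition. A secondary technical caveat, which I would handle exactly as the earlier proofs do, is that differentiating $F$ and applying the definition of the functional derivative presupposes that the directional derivative exists along $\phi=r-p_\ast$; this is part of the standing differentiability assumption on $D$. No step here goes beyond what is already used in Theorems \ref{th_three_point}, \ref{th_centroid} and \ref{th_divergence_ball}, so I expect no serious obstacle — the proof is essentially a direct adaptation, with the moment constraints playing the role that $\int r\,\dmu=\int p_\ast\,\dmu=1$ played before.
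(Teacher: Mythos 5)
Your proof is correct and essentially the same as the paper's: both take an arbitrary $R\in\M$, move along $R_\lambda=P_\ast+(R-P_\ast)\lambda$, show $F'(0)=0$ from the stationarity condition (\ref{eq_moments_1}) together with the moment constraints, and conclude $F(1)>F(0)$ by strict convexity, with the same contradiction argument for uniqueness of the solution of (\ref{eq_moments_1}). The only cosmetic difference is that the paper augments $F$ with the linear term $\sum_i\beta_i\int T_i(z)r_\lambda(z)\,\dmu(z)$ and cancels it at the end, whereas you cancel the $T_i$ terms directly inside $F'(0)$ via $\int T_i(z)(r(z)-p_\ast(z))\,\dmu(z)=0$; the two computations are algebraically identical (and your remark about convexity of $\M$, while true, is not actually needed for the argument).
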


\begin{proof}
We use the same technique in Theorem \ref{th_centroid}.
Consider an arbitrary $R\in \M$.
Let $R_\lambda\eqdef P_\ast+(R-P_\ast)\lambda$ and $F(\lambda)\eqdef D(P\|R_\lambda)+ \sum_i \beta_i \int T_i(z) r_\lambda(z) \dmu(z)$.

By differentiating $F(\lambda)$ with respect to $\lambda$ and substituting $\lambda=0$, it follows that 
\begin{align}
F'(0)=\int \biggl(\frac{\delta D(P\|P_\ast)}{\delta p_\ast(z)}+\sum_i \beta_i T_i(z)\biggr)(r(z)-p_\ast(z))\dmu(z)=0,
\end{align}
where we use (\ref{eq_moments_1}), $\int r\dmu=\int p_\ast\dmu=1$ and the definition of the functional derivative.
From Lemma \ref{lem_convex} and the linearity of $\sum_i \beta_i \int T_i(z) r_\lambda(z) \dmu(z)$ with respect to $\lambda$, $F(\lambda)$ is strictly convex with respect to $\lambda$.
Hence, we have $F(1)>F(0) +F'(0)(1-0)=F(0)$.
From $F(1)=D(P\|R)+ \sum_i \beta_i \int T_i(z) r(z) \dmu(z)$, $F(0)=D(P\|P_\ast)+ \sum_i \beta_i \int T_i(z) p_\ast(z) \dmu(z)$, it follows that 
\begin{align}
D(P\|R)+ \sum_i \beta_i \int T_i(z) r(z) \dmu(z)>D(P\|P_\ast)+ \sum_i \beta_i \int T_i(z) p_\ast(z) \dmu(z).
\end{align}
From $\int T_i(z) r(z) \dmu(z)=\int T_i(z) p_\ast(z) \dmu(z)=m_i$, it follows that 
\begin{align}
D(P\|R)>D(P\|P_\ast).
\end{align}

Since $R$ is an arbitrary probability measure in $\M$, it follow that $P_\ast$ is the unique minimizer.
If there exists an another probability measure $\tilde{P_\ast}$ that is the solution of (\ref{eq_moments_1}), $\tilde{P_\ast}$ is also the minimizer of (\ref{eq_moments_2}).
This contradicts that $P_\ast$ is the unique minimizer.
Hence, the result follows. 
\end{proof}

\section{Summary}
We have discussed the minimization problems and geometric properties for the differentiable and strictly convex divergences. 
We have derived the three-point inequality and introduced the divergence lines, inner products, balls and orthogonal subsets that are the generalization of lines, inner product, spheres and planes perpendicular to a line in the Euclidean space.

Furthermore, we have shown the minimizer conditions and the uniqueness of the minimizer if there exist the solutions that satisfy the minimizer conditions in the following cases, 

1) Minimization of weighted average of divergences from multiple probability measures.

2) Minimization of divergence between a probability measure and the orthogonal subsets, divergence balls, or the set subject to linear moment-like constraints.

\bibliography{reference_opt}
\end{document}